\newcolumntype{C}[1]{>{\centering\arraybackslash}p{#1}}
\pgfplotsset{compat=1.13}
\newtheorem{theorem}{Theorem}
\begin{document}


\title{A Capacity Region Outer Bound for the Two-User Perturbative Nonlinear Fiber Optical Channel}%

\author{
\IEEEauthorblockN{Viswanathan~Ramachandran, Astrid~Barreiro, Gabriele~Liga, and Alex~Alvarado}\\
\IEEEauthorblockA{Eindhoven University of Technology, 5600 MB Eindhoven, The Netherlands}
}
\maketitle

\begin{abstract}
    We study a nonlinear fiber optical channel impaired by cross-phase modulation and dispersion from the viewpoint of an interference channel. We characterize an outer bound on the capacity region of simultaneously achievable rate pairs, assuming a two-user perturbative channel model. 
\end{abstract}


\section{Introduction}
In wavelength division multiplexing (WDM), independent data are multiplexed into a single optical fiber using several transmitters, with corresponding demultiplexing at the receiver side. The nonlinear {Kerr effect} in an optical fiber causes the signal in one wavelength to interfere with the signals in other wavelengths. A very difficult form of interchannel nonlinearity to compensate for is cross-phase modulation (XPM), which combined with chromatic dispersion introduces memory into the channel. The fundamental limits of communication over such channels with nonlinearity and dispersion fall within the domain of \textit{multiuser information theory}, which investigates tradeoffs between the rates at which \textit{all different users} can operate.

The dominant paradigm in optical multi-user channels, however, has been the study of achievable rates for individual users in the system. For instance, Secondini and Forestieri~\cite{secondini2016scope} studied the capacity of a single user in a WDM system and proved that it grows unbounded with power as opposed to Gaussian achievable information rates. Agrell and Karlsson~\cite{agrell2015influence} studied the impact of different {behavioral models} for the interfering users on the capacity of a specific user in the system. Such works attempt to reduce the analysis of a multi-user problem to more familiar single-user problems. Moreover, as far as capacity upper bounds are concerned, the only known ones for a general nonlinear Schr{\"o}dinger (NLSE) channel are that of Yousefi \emph{et al.}~\cite{yousefi2015upper}, Kramer \emph{et al.}~\cite{kramer2015upper} and Keykhosravi \emph{et al.}~\cite{keykhosravi2017tighter}, the first two of which coincide with AWGN channel capacity.

In the multi-user information theory literature, multiple one-to-one communications over a shared medium with crosstalk between the users is known as an \textit{interference channel}~\cite[Chapter 6]{el2011network}. Optical interference channels have attracted some attention in the literature. For instance, Taghavi \emph{et al.}~\cite{taghavi2006multiuser} analyzed the benefits of multi-user detection in WDM systems by modelling it as a \emph{multiple access channel} (which is nothing but full receiver cooperation in an interference channel). Ghozlan and Kramer~\cite{ghozlan2017models} studied an interference channel model based on logarithmic perturbation and introduced \emph{interference focusing} to achieve the optimal pre-log factors.

We observe that a study of the set of \emph{simultaneously achievable} rates that captures the contention amongst the different users accessing the optical channel transmission resources has received very little attention in the literature, with the aforementioned exceptions~\cite{taghavi2006multiuser} ~\cite{ghozlan2017models}. Moreover, capacity upper bounds are as of today missing in the framework of optical multi-user channels.

In this paper, we propose a novel outer bound on the \emph{capacity region}, i.e., the region of all simultaneously achievable information rates, of a multi-user/WDM channel where both transmitters and receivers are independently operated. The investigated channel is a simplified version (memoryless) of the perturbative multi-user model where both chromatic dispersion and Kerr nonlinearity are assumed during fiber propagation.

\section{Channel Model}
We study the two-user WDM system shown in Fig.~\ref{fig:ICmodel}, where the interference channel $p(y,z|x,w)$ encompasses the electro-optical (E-O) conversion, the WDM multiplexing, the physical channel, the WDM demultiplexing, and optical-electrical (E-O) conversion. We assume single-polarization transmission over a single span of standard single mode fiber (SSMF). The output at the receiver of user$-1$ can be expressed using a first-order regular perturbative discrete-time model~\cite{mecozzi2012nonlinear,dar2013properties}
\begin{align}
Y_k=X_k+\!\!\sum_{l,m,p}\!\!c_{l,m,p}^{(x)}W_{k-m}W_{k-p}^{*}X_{k-l}+\!N_{k}^{(x)},\! \label{eq:model}
\end{align}
where $X_k$ represents the input of user$-1$ at time instant $k \in [1:n]$, $W_k$ is the input of the other user at instant $k$, while $W_{k-m}$ represents the corresponding input at a time lag of $m$. Note that an $n-$ length block of output symbols is denoted by $Y^n \triangleq (Y_1,Y_2,\cdots,Y_n)$ in Fig.~\ref{fig:ICmodel} and henceforth in the sequel. The complex channel coefficients $c_{l,m,p}^{(x)}$ are given in~\cite{dar2013properties}, are computed numerically, and depend on the properties of the optical link and the transmission parameters. In \eqref{eq:model}, $N_{k}^{(x)}$ models amplified spontaneous emission (ASE) noise from the erbium-doped amplifier (EDFA). The ASE noise is assumed to be circularly symmetric complex Gaussian with mean zero and variance $\sigma^2$ per complex dimension. Similarly, the second channel is specified by
\begin{align}
Z_k=W_k+\!\!\sum_{l,m,p}\!\!c_{l,m,p}^{(w)}X_{k-m}X_{k-p}^{*}W_{k-l}+\!N_{k}^{(w)}\!.\! \label{eq:model2}
\end{align}
We assume length-$n$ codewords with per-codeword power constraints:
\begin{align}
\frac{1}{n} \sum_{k=1}^n \mathbb{E}[|X_k|^2] &\leq P_1, \frac{1}{n} \sum_{k=1}^n \mathbb{E}[|W_k|^2] &\leq P_2.
\end{align}
Though we assume that self-phase modulation (SPM) is ideally compensated in \eqref{eq:model}--\eqref{eq:model2}, the bounds can be generalized to take into account SPM as well as XPM.

\begin{figure*}
\begin{center}
\scalebox{0.8}{\begin{tikzpicture}[ultra thick]
\node (e1) at (-2.4,1.5) [rectangle, draw, minimum height=1.0cm]{\textbf{Enc}~$\boldsymbol{1}$}; 
\node (e2) at (-2.4,-1.5) [rectangle, draw, minimum height=1.0cm]{\textbf{Enc}~$\boldsymbol{2}$}; 
\node (e3) at (0.7,0) [rectangle, draw, minimum height=4cm, minimum width=1.5cm, align=center]{\small \textbf{E-O} \\\textbf{$+$} \\ \textbf{WDM}\\ \textbf{MUX}}; 
\node (e4) at (6.2,0) [rectangle, draw, minimum height=4cm, minimum width=1.5cm,align=center]{\small \textbf{WDM}\\ \textbf{DEMUX}\\+\\\textbf{E-O}};
\node (c1) at (2.5,0) [circle, draw, minimum height=0.5cm]{};
\node (c2) at (2.4,0) [circle, draw, minimum height=0.5cm]{};
\node[isosceles triangle, draw, minimum size =.9cm] (T)at (4.2,-0.25){};
\node (d1) at (8.3,1.5) [rectangle, draw, right, minimum height=1.0cm]{\textbf{Dec}~$\boldsymbol{1}$};
\node (d2) at (8.3,-1.5) [rectangle, draw, right, minimum height=1.0cm]{\textbf{Dec}~$\boldsymbol{2}$};
\node () at (2.4,-0.5) {\textbf{SSMF}};
\node () at (3.4,-1) {\textbf{EDFA}};
\node (e5) at (3.4,0) [rectangle, dashed, draw, minimum height=5cm, minimum width=7.6cm]{};
\node () at (-1.4,1.8) {\large $X^n$};
\node () at (-1.4,-1.8) {\large $W^n$};
\node () at (8,1.8) {\large $Y^n$};
\node () at (8,-1.8) {\large $Z^n$};
\node () at (3.5,3.4) {\textbf{Interference Channel}};
\node (distr) at (3.5,3) {\large $p(y,z|x,w)$};
\node[draw, ellipse, minimum width=.4cm, minimum height=4cm, thin] at (-.8,0) (ell1) {};
\node[draw, ellipse, minimum width=.4cm, minimum height=4cm, thin] at (7.5,0) (ell2) {};

\draw[-,thin] (distr)--(distr-|ell1)--(ell1);
\draw[-,thin] (distr)--(distr-|ell2)--(ell2);

\draw[<-] (e1) --++(-1.25,0) node[left]{\large $M_1$};
\draw[<-] (e2) --++(-1.25,0) node[left]{\large $M_2$};
\draw[->] (d1) --++(1.25,0) node[right]{\large $\hat{M}_1$};
\draw[->] (d2) --++(1.25,0) node[right]{\large $\hat{M}_2$};
\draw[->] (e3.-13)  -- (T);
\draw[->] (T.-1)  -- (T.-1-|e4.west);
\draw[->] (e1) --++(2.35,0) node[above]{};
\draw[->] (e2) --++(2.35,0) node[above]{};
\draw[<-] (d1.west) --++(-1.35,0) node[midway, above]{};
\draw[<-] (d2.west) --++(-1.35,0) node[midway, above]{};
\end{tikzpicture}}
\caption{Interference channel model for WDM transmission.}
\label{fig:ICmodel}
\end{center}
\end{figure*}
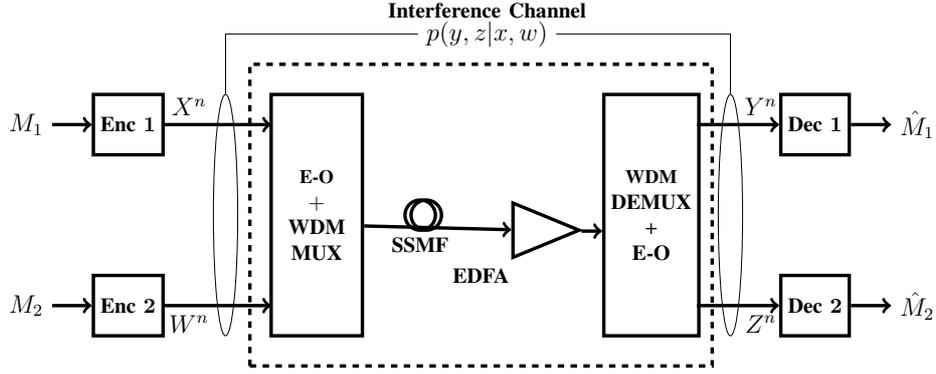
\section{Capacity Region Outer Bounds}
In this section, we analyze the region of simultaneously achievable rate pairs $(R_1,R_2)$. Before stating the outer bound, we need some definitions. An $(n,2^{nR_1},2^{nR_2})$ code for this channel consists of two message sets $[1:2^{nR_1}]$ and $[1:2^{nR_2}]$, two encoders where Enc $1$ maps a message $M_1 \in [1:2^{nR_1}]$ into a codeword $X^n(M_1)$ and Enc $2$ maps a message $M_2 \in [1:2^{nR_2}]$ into a codeword $W^n(M_2)$, and two decoders where Dec-$1$ assigns an estimate $\hat{M}_1$ (or an error message) to each received sequence $Y^n$ and Dec-$2$ assigns an estimate $\hat{M}_2$ (or an error message) to each received sequence $Z^n$. The probability of error is defined as
\begin{align}
P_e^{(n)} = \textup{Pr}((\hat{M}_1(Y^n),\hat{M}_2(Z^n)) \neq (M_1,M_2)).
\end{align}
A rate pair $(R_1,R_2)$ is said to be \emph{achievable} if there exists a sequence of $(n,2^{nR_1},2^{nR_2})$ codes such that $\lim_{n \to \infty} P_e^{(n)} = 0$. The capacity region $\mathcal{C}$ is the closure of the set of achievable rate pairs $(R_1,R_2)$. 

Capacity region analysis for the full model specified by \eqref{eq:model}--\eqref{eq:model2} is involved due to the channel memory. Hence as a  first step towards capacity region outer bounds, we focus on the following approximation
\begin{align}
\tilde{Y}_k&=X_k+c_{0,0,0}^{(x)}|W_k|^2 X_k+N_k^{(x)}, \label{eq:memless1}\\
\tilde{Z}_k&=W_k+c_{0,0,0}^{(w)}|X_k|^2 W_k+N_k^{(w)}. \label{eq:memless2}
\end{align} 
In particular, note that the approximated model ignores all the
elements in \eqref{eq:model}--\eqref{eq:model2} that introduce memory.

\begin{table}[t]
\def\arraystretch{1.0}%
\caption{Model Parameters}
\centering
\footnotesize
\begin{tabular}{c c}
\toprule
Parameter & Value \\ 
\midrule
Memory length & $5$ \\
Distance & $250 \: \textrm{km}$ \\
Nonlinearity parameter $\gamma$ & $1.2 \: \textrm{W}^{-1} \textrm{km}^{-1}$ \\
Baud Rate & $32 \: \textrm{Gbaud}$ \\
Fiber attenuation $\alpha$ & $0.2 \: \textrm{dB/km}$ \\
Group velocity dispersion $\beta_2$ & $-21.7 \: \textrm{ps\textsuperscript{2}/km}$ \\
\bottomrule
\end{tabular}
\label{table:param} 
\end{table}
\normalsize

We now present an outer bound on the capacity region of the approximated channel in \eqref{eq:memless1}--\eqref{eq:memless2}. For convenience, let us denote \smash{$c_{0,0,0}^{(x)}\triangleq g_{(x)}=g_{(x)}^{R}+jg_{(x)}^{I}$} and \smash{$c_{0,0,0}^{(w)}\triangleq g_{(w)}=g_{(w)}^{R}+jg_{(w)}^{I}$} in terms of the respective real and imaginary parts.

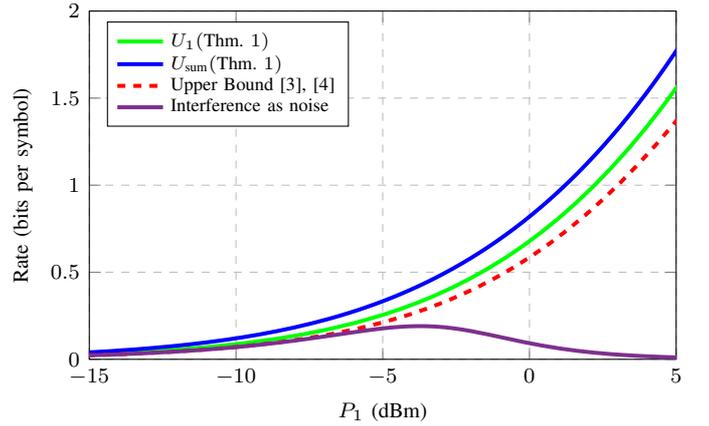
\begin{figure}[!t]
\vspace{-6.5cm}
\begin{center}
%
%
\definecolor{mycolor1}{rgb}{0.00000,0.44700,0.74100}%
\definecolor{mycolor2}{rgb}{0.85000,0.32500,0.09800}%
\definecolor{mycolor3}{rgb}{0.92900,0.69400,0.12500}%
\definecolor{mycolor4}{rgb}{0.49400,0.18400,0.55600}%
\begin{tikzpicture}[tight background]

\begin{axis}[%
every axis/.append style={font=\footnotesize},
width=1.05\columnwidth,
height=0.7\columnwidth,
xmin=-15,
xmax=5,
xlabel style={font=\color{white!15!black}},
xlabel={$P_1~\text{(dBm)}$},
ymin=0,
ymax=2.0,
ylabel style={font=\color{white!15!black}},
ylabel={Rate (bits per symbol)},
ylabel near ticks,
xlabel near ticks,
axis background/.style={fill=white},
xtick={-20,-15,...,10},
xmajorgrids,
ymajorgrids,
legend style={
legend pos=north west,font=\scriptsize,legend cell align=left,row sep=-0.5ex,grid style={dashed}
}
]
\addplot [color=green, line width=1.5pt]
  table[row sep=crcr]{%
-20	0.0071973967862462\\
-16.1	0.02184965548647\\
-14	0.0353364412960744\\
-12.5	0.0495938745649482\\
-11	0.0694290000759565\\
-9.8	0.090658280559893\\
-8.6	0.118062305309039\\
-7.7	0.143618547223284\\
-6.8	0.174328456050464\\
-5.9	0.211077093729315\\
-5	0.25484530148557\\
-4.4	0.288448564591842\\
-3.8	0.325981371251466\\
-3.2	0.367792584349452\\
-2.6	0.414241135875667\\
-2	0.465693032006364\\
-1.4	0.522518163803468\\
-0.800000000000001	0.585087085220593\\
-0.199999999999999	0.653767952379344\\
0.399999999999999	0.728923843082679\\
1	0.810910689918064\\
1.6	0.900076060663551\\
2.2	0.996759002677486\\
2.8	1.10129113035406\\
3.4	1.21399907384693\\
4	1.33520832130178\\
4.6	1.46524837565144\\
5.2	1.60445901319503\\
5.8	1.75319728132463\\
6.4	1.91184471877165\\
7	2.08081414129086\\
7.6	2.26055523115548\\
7.9	2.35461669423647\\
8.2	2.45155812393972\\
8.5	2.55144724298214\\
8.8	2.65435422045029\\
9.1	2.7603515016426\\
9.4	2.86951357470533\\
9.7	2.98191667325082\\
10	3.09763841536966\\
10.3	3.21675738069183\\
10.6	3.33935262838579\\
10.9	3.46550316018452\\
11.2	3.59528733367396\\
11.5	3.72878223215643\\
11.8	3.86606299839892\\
12.1	4.00720214048406\\
12.4	4.1522688187908\\
12.7	4.30132812383161\\
13	4.45444035524917\\
13.3	4.61166031270873\\
13.6	4.77303660969138\\
13.9	4.93861102127169\\
14.2	5.10841787682604\\
14.5	5.28248350824114\\
14.8	5.46082576355913\\
15.1	5.64345359509818\\
15.4	5.83036672993068\\
15.7	6.02155542920394\\
16	6.21700034118537\\
16.3	6.41667245115359\\
16.6	6.62053312939789\\
16.9	6.82853427669803\\
17.2	7.04061856480258\\
};
\addlegendentry{$U_1 (\textrm{Thm.}~\ref{thm:OBwdm})$}

\addplot [color=blue, line width=1.5pt]
  table[row sep=crcr]{%
-15.2	0.0379125560546889\\
-14.3	0.0467697418672621\\
-13.4	0.0573641977931985\\
-12.8	0.0655835352199325\\
-12.2	0.0748764313222896\\
-11.6	0.0853825133647135\\
-11	0.0972531812111956\\
-10.4	0.110652186383776\\
-9.8	0.125756073158419\\
-9.2	0.142754464471846\\
-8.6	0.16185017984686\\
-8	0.183259178664827\\
-7.4	0.20721035105664\\
-6.8	0.233945444209564\\
-6.5	0.248435980673841\\
-6.2	0.263718850241984\\
-5.9	0.27982772657553\\
-5.6	0.296796991597201\\
-5.3	0.314661712784019\\
-5	0.333457620580884\\
-4.7	0.353221086371279\\
-4.4	0.373989101459673\\
-4.1	0.395799257531689\\
-3.8	0.418689729062431\\
-3.5	0.442699258138667\\
-3.2	0.467867142146165\\
-2.9	0.494233224747175\\
-2.6	0.521837890534004\\
-2.3	0.55072206369068\\
-2	0.580927210924964\\
-1.7	0.612495348845968\\
-1.4	0.645469055857872\\
-1.1	0.679891488517445\\
-0.800000000000001	0.715806402163297\\
-0.500000000000002	0.753258175469517\\
-0.200000000000001	0.792291838408978\\
0.0999999999999996	0.832953102936544\\
0.399999999999999	0.875288395525528\\
0.699999999999999	0.919344890519621\\
0.999999999999998	0.965170543105565\\
1.3	1.01281412057829\\
1.6	1.06232523046976\\
1.9	1.11375434405392\\
2.2	1.16715281373055\\
2.5	1.22257288283549\\
2.8	1.28006768652534\\
3.1	1.3396912425399\\
3.4	1.40149843084933\\
3.7	1.46554496143588\\
4	1.53188732972928\\
4.3	1.60058275949555\\
4.6	1.67168913325515\\
4.9	1.74526491056222\\
5.2	1.82136903469859\\
};
\addlegendentry{$U_{\textrm{sum}} (\textrm{Thm.}~\ref{thm:OBwdm})$}

\addplot [color=red, dashed, line width=1.5pt]
  table[row sep=crcr]{%
-20	0.00719550140420466\\
-16.7	0.0153402894325296\\
-14.3	0.0265547080379775\\
-12.5	0.040004545948463\\
-11	0.0561900791614214\\
-9.8	0.0736234203670634\\
-8.6	0.0962877825090622\\
-7.7	0.117578269562291\\
-6.8	0.143347448357972\\
-5.9	0.174431717104476\\
-5	0.211777127569885\\
-4.1	0.25643411618185\\
-3.5	0.29082754254943\\
-2.9	0.329331090580339\\
-2.3	0.372307841076601\\
-1.7	0.420122849286795\\
-1.1	0.473135989410203\\
-0.5	0.53169382897757\\
0.0999999999999979	0.596120799682655\\
0.699999999999999	0.666710040903645\\
1.3	0.743714386376759\\
1.9	0.827338026689379\\
2.5	0.917729395366322\\
3.1	1.01497578431492\\
3.7	1.11910009348149\\
4.3	1.23005996789747\\
4.9	1.34774939041757\\
5.5	1.47200260443837\\
6.1	1.6026000636319\\
6.7	1.73927596778322\\
7.3	1.88172686012926\\
7.9	2.02962073727685\\
8.5	2.18260615331427\\
9.1	2.34032087322535\\
9.7	2.50239973127271\\
10.3	2.66848146124816\\
11.2	2.92434386727662\\
12.1	3.18730067758539\\
13	3.45632143459955\\
13.9	3.7304817755326\\
15.1	4.10263149806735\\
16.3	4.48084518214059\\
17.8	4.96014002109612\\
19.6	5.54227482851593\\
20.2	5.7375901658105\\
};
\addlegendentry{Upper Bound~\cite{yousefi2015upper,kramer2015upper}}

\addplot [color=mycolor4, line width=1.5pt]
  table[row sep=crcr]{%
-15.2	0.0216176377316355\\
-14	0.0284236338243975\\
-13.1	0.0348778497433599\\
-12.2	0.0427631078687316\\
-11.3	0.0523695838774021\\
-10.4	0.0640201838673757\\
-9.8	0.0730867033564007\\
-9.2	0.083295890411101\\
-8.6	0.0947068670387026\\
-8	0.107323124760658\\
-7.1	0.128254691322832\\
-5.6	0.16492886121371\\
-5	0.177531472079064\\
-4.7	0.182657071750809\\
-4.4	0.186699095497444\\
-4.1	0.189433989820881\\
-3.8	0.190659897717556\\
-3.5	0.190216934192831\\
-3.2	0.188007150325085\\
-2.9	0.184010702600089\\
-2.6	0.178294731146414\\
-2.3	0.171012456916943\\
-2	0.162391904230851\\
-1.4	0.142296856288977\\
-0.199999999999999	0.0991191109888607\\
0.399999999999999	0.0797310869795353\\
1	0.0630530912769025\\
1.6	0.0492474956831206\\
2.2	0.0381246073560071\\
2.8	0.0293302564288727\\
3.4	0.0224667545240536\\
4	0.0171577633977922\\
4.6	0.0130762154105888\\
5.2	0.00995140612985068\\
};
\addlegendentry{Interference as noise}

\end{axis}

\begin{axis}[%
width=5.833in,
height=4.375in,
at={(0in,0in)},
scale only axis,
xmin=0,
xmax=1,
ymin=0,
ymax=1,
axis line style={draw=none},
ticks=none,
axis x line*=bottom,
axis y line*=left
]
\end{axis}
\end{tikzpicture}%
\end{center}
\caption{Outer bounds versus input power}
\label{fig:OBsnr}
\end{figure}

\begin{figure*}[!t]
\definecolor{mycolor4}{rgb}{0.49400,0.18400,0.55600}%
\begin{tikzpicture}[]
\hspace{-.3cm}
\begin{groupplot}[
 group style={
    group size=3 by 1,
    horizontal sep=40pt,
    group name=G},
    scale only axis,
unbounded coords=jump,
xmin=0,
xmax=1,
xlabel style={font=\color{white!15!black}},
xlabel={$R_1$},
ymin=0,
ymax=1,
ylabel style={font=\color{white!15!black}},
ylabel={$R_2$},
axis background/.style={fill=white},
axis x line*=bottom,
axis y line*=left,
xmajorgrids,
ymajorgrids,
ylabel near ticks,
xlabel near ticks,
legend style={font=\scriptsize,legend cell align=left,at={(1.6,1.2)},legend columns=3,column sep=10pt},
width=0.265\textwidth,
height=0.25\textwidth,
every axis/.append style={font=\footnotesize}]

\nextgroupplot[]
 \addplot [area legend, color=black,fill=green!20!white]
  table[row sep=crcr]{
  0 0\\
  0	0.39\\
  0.1	0.39\\
  0.39	0.1\\
  0.39	0\\
  };
%

\addplot[area legend, draw=black, fill=mycolor4]
table[row sep=crcr] {%
x	y\\
0	0\\
0.18	0\\
0.18	0.18\\
0	0.18\\
0	0\\
}--cycle;
%

\addplot [color=red, line width=2.0pt, only marks, mark=asterisk, mark options={solid, red}]
  table[row sep=crcr]{%
0.39	0\\
};
%

\addplot [color=red, line width=2.0pt, only marks, mark=asterisk, mark options={solid, red}, forget plot]
  table[row sep=crcr]{%
0	0.39\\
};
\addplot [color=red, dashed, line width=1.5pt, forget plot]
  table[row sep=crcr]{%
0.39	0\\
0.39	0.39\\
};
\addplot [color=red, dashed, line width=1.5pt, forget plot]
  table[row sep=crcr]{%
0	0.39\\
0.39	0.39\\
};
\node at (axis cs:0.9,0.9){\bf (a)};

\node (v1) at (0.1,0.39) {};
\node (v2) at (0.39,0.1) {};
\node (v3) at (0.39,0.39) {};
\draw[opacity=0.5,fill=red] (v1.center)--(v2.center)--(v3.center)--(v1.center);

\draw[green,line width=2.0pt] (0.39,0)  -- (0.39,0.1) ;
\draw[green,line width=2.0pt] (0,0.39)  -- (0.1,0.39) ;
\draw[blue,line width=2.0pt] (0.1,0.39)  -- (0.39,0.1) ;

\nextgroupplot[]
\addplot [area legend, color=black,fill=green!20!white]
  table[row sep=crcr]{
  0 0\\
  0	0.69\\
  0.14	0.69\\
  0.69	0.14\\
  0.69	0\\
  };
\addlegendentry{Theorem~\ref{thm:OBwdm}}

\addplot[area legend, draw=black, fill=mycolor4]
table[row sep=crcr] {%
x	y\\
0	0\\
0.09	0\\
0.09	0.09\\
0	0.09\\
0	0\\
}--cycle;
\addlegendentry{Interference as noise}

\addplot [color=red, line width=2.0pt, only marks, mark=asterisk, mark options={solid, red}]
  table[row sep=crcr]{%
0.59	0\\
};
\addlegendentry{Linear capacity -- single user~\cite{yousefi2015upper,kramer2015upper}}

\addplot [color=red, line width=2.0pt, only marks, mark=asterisk, mark options={solid, red}, forget plot]
  table[row sep=crcr]{%
0	0.59\\
};
\addplot [color=red, dashed, line width=1.5pt, forget plot]
  table[row sep=crcr]{%
0.59	0\\
0.59	0.59\\
};
\addplot [color=red, dashed, line width=1.5pt, forget plot]
  table[row sep=crcr]{%
0	0.59\\
0.59	0.59\\
};
\node at (axis cs:0.9,0.9){\bf (b)};

\node (v4) at (0.24,0.59) {};
\node (v5) at (0.59,0.24) {};
\node (v6) at (0.59,0.59) {};
\draw[opacity=0.5,fill=red] (v4.center)--(v5.center)--(v6.center)--(v4.center);

\node (d1) at (0.24,0.59) {};
\node (d2) at (0.14,0.69) {};
\node (d3) at (0,0.69) {};
\node (d4) at (0,0.59) {};
\node (d5) at (0.59,0.24) {};
\node (d6) at (0.69,0.14) {};
\node (d7) at (0.69,0) {};
\node (d8) at (0.59,0) {};
\draw[opacity=0.5,fill=gray] (d1.center)--(d2.center)--(d3.center)--(d4.center)--(d1.center);
\draw[opacity=0.5,fill=gray] (d5.center)--(d6.center)--(d7.center)--(d8.center)--(d5.center);

\draw[green,line width=2.0pt] (0.69,0)  -- (0.69,0.14) ;
\draw[green,line width=2.0pt] (0,0.69)  -- (0.14,0.69) ;
\draw[blue,line width=2.0pt] (0.14,0.69)  -- (0.69,0.14) ;

\nextgroupplot[every axis/.append style={font=\footnotesize}]
\addplot [area legend, color=black,fill=green!20!white]
  table[row sep=crcr]{
  0 0\\
  0	0.95\\
  0.16	0.95\\
  0.95	0.16\\
  0.95	0\\
  };

\addplot[area legend, draw=black, fill=mycolor4]
table[row sep=crcr] {%
x	y\\
0	0\\
0.04	0\\
0.04	0.04\\
0	0.04\\
0	0\\
}--cycle;

\addplot [color=red, line width=2.0pt, only marks, mark=asterisk, mark options={solid, red}]
  table[row sep=crcr]{%
0.83	0\\
};

\addplot [color=red, line width=2.0pt, only marks, mark=asterisk, mark options={solid, red}, forget plot]
  table[row sep=crcr]{%
0	0.83\\
};
\addplot [color=red, dashed, line width=1.5pt, forget plot]
  table[row sep=crcr]{%
0.83	0\\
0.83	0.83\\
};
\addplot [color=red, dashed, line width=1.5pt, forget plot]
  table[row sep=crcr]{%
0	0.83\\
0.83	0.83\\
};
\node at (axis cs:0.9,0.9){\bf (c)};

\node (v7) at (0.28,0.83) {};
\node (v8) at (0.83,0.28) {};
\node (v9) at (0.83,0.83) {};
\draw[opacity=0.5,fill=red] (v7.center)--(v8.center)--(v9.center)--(v7.center);

\node (e1) at (0.28,0.83) {};
\node (e2) at (0.16,0.95) {};
\node (e3) at (0,0.95) {};
\node (e4) at (0,0.83) {};
\node (e5) at (0.83,0.28) {};
\node (e6) at (0.95,0.16) {};
\node (e7) at (0.95,0) {};
\node (e8) at (0.83,0) {};
\draw[opacity=0.5,fill=gray] (e1.center)--(e2.center)--(e3.center)--(e4.center)--(e1.center);
\draw[opacity=0.5,fill=gray] (e5.center)--(e6.center)--(e7.center)--(e8.center)--(e5.center);

\draw[green,line width=2.0pt] (0.95,0)  -- (0.95,0.16) ;
\draw[green,line width=2.0pt] (0,0.95)  -- (0.16,0.95) ;
\draw[blue,line width=2.0pt] (0.16,0.95)  -- (0.95,0.16) ;
\end{groupplot}
\end{tikzpicture}
\caption{Outer bound rate regions for a transmitted power per user of (a) $-2.9$ dBm , (b) $0$ dBm, and (c) $2$ dBm.}
\label{fig:OBregion}
\end{figure*}
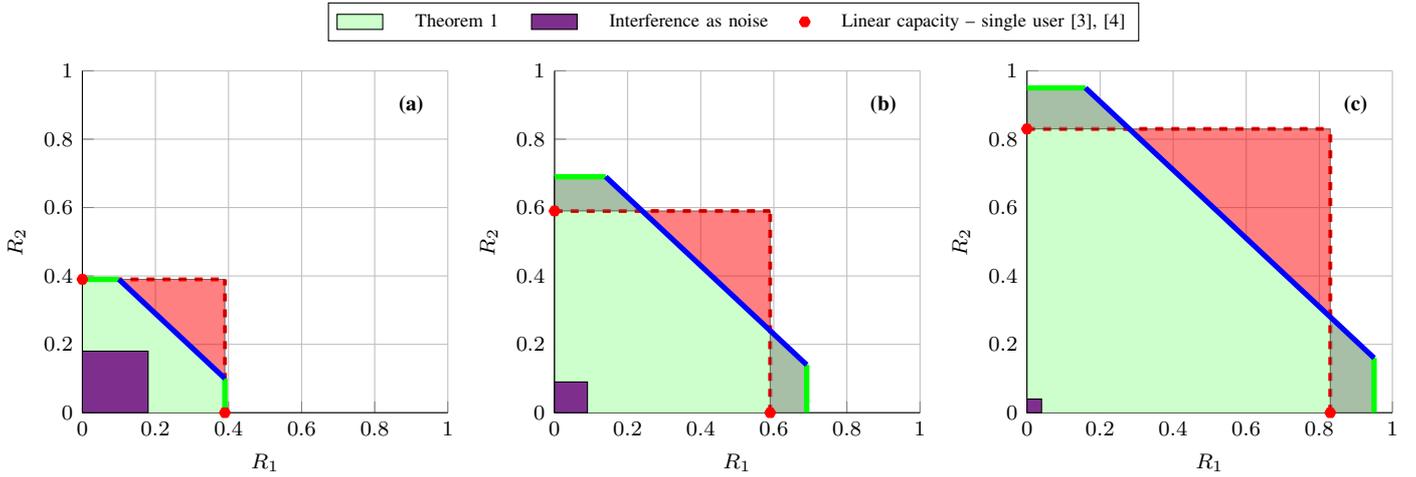

\begin{theorem} \label{thm:OBwdm}
The capacity region of the approximated channel \eqref{eq:memless1}--\eqref{eq:memless2} is upper bounded by the set of $(R_1,R_2)$ pairs such that
\begin{align}
&R_1 \leq U_1 \triangleq \log\left(1+\frac{(1+2g_{(x)}^{R}P_2+2|g_{(x)}|^2 P_2^2)P_1}{2 \sigma^2}\right),\\
&R_2 \leq U_2 \triangleq \log\left(1+\frac{(1+2g_{(w)}^{R}P_1+2|g_{(w)}|^2 P_1^2)P_2}{2 \sigma^2}\right), \\
&R_1+R_2 \leq U_{\textrm{sum}} \notag\\
&\triangleq 2\log\left(\frac{2^{U_1}+2^{U_2}}{2}+\frac{|g_{(x)}|^2 P_2^2 P_1+|g_{(w)}|^2 P_1^2 P_2}{2\sigma^2}\right).
\end{align}
\end{theorem}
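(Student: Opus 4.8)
The plan is to run the standard genie-aided converse machinery for interference channels, specialized to the memoryless nonlinear model \eqref{eq:memless1}--\eqref{eq:memless2}. Throughout, Fano's inequality gives $nR_1\le I(M_1;Y^n)+n\epsilon_n$ and $nR_2\le I(M_2;Z^n)+n\epsilon_n$ with $\epsilon_n\to 0$, and since $M_1-X^n-Y^n$ and $M_2-W^n-Z^n$ are Markov chains (the interferer's message is independent of the intended one), it suffices to bound $I(X^n;Y^n)$ and $I(W^n;Z^n)$ and then single-letterize.

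For the individual bound $R_1\le U_1$ I would hand the interfering codeword $W^n$ to decoder~$1$ as genie side information. Independence of $M_1$ and $W^n$ gives $I(M_1;Y^n)\le I(M_1;Y^n\mid W^n)\le I(X^n;Y^n\mid W^n)$. Conditioned on $W^n$ the channel is memoryless with a \emph{known} multiplicative gain, so $h(Y^n\mid X^n,W^n)=\sum_k h(N_k^{(x)})$ and $h(Y^n\mid W^n)\le\sum_k h(Y_k\mid W_k)$, which collapses the bound to $\sum_k I(X_k;Y_k\mid W_k)$. For each symbol, given $W_k=w$ the output $\tilde Y_k=(1+g_{(x)}|w|^2)X_k+N_k^{(x)}$ is a scaled Gaussian channel, so the circularly symmetric maximum-entropy bound yields $I(X_k;Y_k\mid W_k=w)\le\log(1+|1+g_{(x)}|w|^2|^2\,\mathbb{E}[|X_k|^2]/(2\sigma^2))$. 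Pulling the expectation over $W_k$ inside the concave logarithm by Jensen's inequality and applying the power constraint, the effective gain becomes $\mathbb{E}[|1+g_{(x)}|W_k|^2|^2]=1+2g_{(x)}^{R}\mathbb{E}[|W_k|^2]+|g_{(x)}|^2\mathbb{E}[|W_k|^4]$; the coefficient $2|g_{(x)}|^2P_2^2$ in $U_1$ is then recovered from the circularly symmetric Gaussian fourth moment $\mathbb{E}[|W_k|^4]=2(\mathbb{E}[|W_k|^2])^2$. The bound $R_2\le U_2$ follows by the symmetric argument with the two users interchanged.

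For the sum constraint I would bound $n(R_1+R_2)\le I(X^n;Y^n)+I(W^n;Z^n)+n\epsilon_n$ and single-letterize each term directly: the channel is memoryless and the two inputs are independent, so given $X^n$ the outputs $Y_k$ are conditionally independent, giving $I(X^n;Y^n)\le\sum_k I(X_k;Y_k)$ with the same per-symbol bound as above (lower-bounding $h(Y_k\mid X_k)$ by the noise entropy). This produces per-symbol output-to-noise ratios equal to $2^{U_1}=V_{Y}/(2\sigma^2)$ and $2^{U_2}=V_{Z}/(2\sigma^2)$, where $V_Y,V_Z$ are the received powers. The two logarithms are then merged into a single expression via $\log a+\log b=2\log\sqrt{ab}\le 2\log\tfrac{a+b}{2}$ (the arithmetic--geometric-mean inequality with monotonicity of the logarithm), which yields the leading $2\log(\tfrac{2^{U_1}+2^{U_2}}{2})$ term; enlarging the argument by the nonnegative nonlinear cross-talk powers $|g_{(x)}|^2P_2^2P_1$ and $|g_{(w)}|^2P_1^2P_2$ (divided by $2\sigma^2$) keeps the inequality valid and delivers the stated closed form $U_{\mathrm{sum}}$.

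The main obstacle is the control of the fourth moment $\mathbb{E}[|W_k|^4]$ (and its counterpart $\mathbb{E}[|X_k|^4]$): the per-codeword constraint bounds only the second moment, so the Gaussian value $2P_2^2$ is not automatic and must be justified either by restricting the interfering user to circularly symmetric Gaussian signaling (a behavioral assumption on the other user) or by imposing an explicit fourth-moment/peak-power constraint. A secondary difficulty is the single-letterization step itself, since the effective per-symbol gains carry cubic terms such as $P_{2,k}^2P_{1,k}$ that are not jointly concave in the per-symbol powers; passing from the per-symbol bounds to the block powers $P_1,P_2$ via Jensen's inequality therefore requires care and is cleanest under a stationary (per-symbol) reading of the power constraints.
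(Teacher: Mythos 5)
Your derivation of the single-user bounds is essentially the paper's own proof: a genie hands $W^n$ to decoder~1, Fano's inequality and data processing give $nR_1 \le I(X^n;\tilde Y^n|W^n)+n\epsilon_n$, conditioning on the interfering codeword makes the channel memoryless so the bound single-letterizes, and each per-symbol term is bounded by a Gaussian maximum-entropy expression followed by Jensen's inequality. The only cosmetic difference is that you work with the complex scalar channel $\tilde Y_k=(1+g_{(x)}|w|^2)X_k+N_k^{(x)}$ and the scaled-AWGN capacity formula, while the paper splits into real/imaginary parts and bounds $\mathrm{det}(\mathrm{cov})$ by $(\mathrm{trace}/2)^2$; both yield the same effective gain $1+2g_{(x)}^{R}|w|^2+|g_{(x)}|^2|w|^4$.

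Where you genuinely deviate is the sum rate. The paper bounds $n(R_1+R_2)\le I(M_1,M_2;\tilde Y^n,\tilde Z^n)+n\epsilon_n$ (full receiver cooperation), bounds $h(\tilde Y_i,\tilde Z_i)$ by the Gaussian entropy of the four-dimensional real vector, and applies $\mathrm{det}(A)\le(\mathrm{trace}(A)/4)^4$; the form $2\log\bigl(\tfrac{2^{U_1}+2^{U_2}}{2}+\cdot\bigr)$ falls out of that $4\times 4$ trace bound. You instead add two non-cooperative single-user bounds and invoke the AM--GM inequality $U_1+U_2\le 2\log\tfrac{2^{U_1}+2^{U_2}}{2}$, then enlarge by the nonnegative cross terms. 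This does prove the inequality as stated---but only because, by that very AM--GM step, the stated $U_{\textrm{sum}}$ always satisfies $U_{\textrm{sum}}\ge U_1+U_2$, i.e., the theorem's sum-rate constraint is logically redundant given the individual constraints. Your shortcut makes this plain (and it is a useful observation: the formula as written can never cut the corner of the rectangle $[0,U_1]\times[0,U_2]$, so the pentagonal regions in the paper's figures cannot follow from it); the paper's joint-output route is the one that could in principle produce a sum-rate constraint that actually bites. One step of yours is misjustified, though fixable: ``given $X^n$ the outputs $Y_k$ are conditionally independent'' is false, since $W^n$ is a codeword with memory. The correct chain, which your parenthetical hints at, is $h(\tilde Y^n|X^n)\ge h(\tilde Y^n|X^n,W^n)=\sum_k h(N_k^{(x)})$ together with $h(\tilde Y^n)\le\sum_k h(\tilde Y_k)$.

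Finally, the two ``obstacles'' you flag are real, but they are not deficits of your attempt relative to the paper, because the paper's proof commits both. It silently substitutes $\mathbb{E}[|w|^4]=2P_2^2$ (Gaussian kurtosis) in the $U_1$ derivation, and even $4P_{2i}^2$ in the sum-rate step, although a converse cannot assume Gaussian inputs and a per-codeword second-moment constraint places no upper bound on fourth moments. Likewise, the paper's Jensen steps (a)--(b) average over the symbol index a function containing products such as $P_{2i}P_{1i}$ and $P_{2i}^2P_{1i}$, which is exactly the joint-concavity problem you describe. So your proposal reproduces the paper's argument for $U_1,U_2$ at the same level of rigor, honestly flags the two places where that argument is leaky, and replaces the joint-output sum-rate computation by a weaker but logically sufficient shortcut.
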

\begin{proof}
See Appendix~\ref{proof.Theorem}.
\end{proof}

\section{Numerical Results}
The parameters used in our numerical results are summarized in Table \ref{table:param}. The rate bounds on user$-1$ ($U_1$) and the sum rate bound $U_{\textrm{sum}}$ in Theorem~\ref{thm:OBwdm} are plotted in Fig. \ref{fig:OBsnr} against the input power for the symmetric case of $P_1=P_2$. For comparison, the linear Gaussian capacity upper bound given by $\log(1+P_1/(2\sigma^2))$ from Yousefi \emph{et al.}~\cite{yousefi2015upper,kramer2015upper} is also shown, along with the lower bound obtained by treating the interference terms in \eqref{eq:model}--\eqref{eq:model2} as Gaussian noise.

In Figs. \ref{fig:OBregion}(a)--(c), we plot the trade-off between the rates of the two users for fixed (and equal) powers of $-2.9$ dBm, $0$ dBm and $2$ dBm, respectively. In Fig. \ref{fig:OBregion}(a) for instance, it can be seen that when user-$1$ transmits at its maximum rate of $0.39$ bits, user-$2$ must lower its rate to a maximum of $0.1$ bits to respect the sum rate constraint. The exhibited rate trade-off region must be contrasted with the more equitable strategies normally employed in optical systems that operate at a rate of $\frac{0.1+0.39}{2}$ bits for each user, i.e., at the mid-point of the dominant face of the pentagon. We have also depicted the respective rate regions obtained by treating the interference terms in \eqref{eq:model}--\eqref{eq:model2} as Gaussian noise for comparison (in blue), that only achieve the rectangular region defined by the single-user rate constraints. It is observed that with increasing powers (the evolution from Fig. \ref{fig:OBregion}(a) to Fig. \ref{fig:OBregion}(c)), the outer bound regions grow in size while the \emph{interference as noise} regions eventually vanish in the highly nonlinear regime.

The rectangular region implied by the $\log(1+P_1/(2\sigma^2))$ outer bound of Yousefi \emph{et al.}~\cite{yousefi2015upper,kramer2015upper} are shown by red dotted lines in Figs. \ref{fig:OBregion}(a)--(c), while the triangular regions shaded in red represent the additional points that are ruled out of the capacity region by our outer bound in Theorem.~\ref{thm:OBwdm}. On the other hand, the trapezoidal portions shaded in gray in Figs. \ref{fig:OBregion}(b) and \ref{fig:OBregion}(c) represent the set of points already ruled out by the outer bound~\cite{yousefi2015upper,kramer2015upper}.

\section{Conclusions}

We took a multi-user information theoretic view of a two-channel WDM system impaired by XPM and dispersion, and derived a capacity region outer bound. This is the very first step towards characterizing the tension between the rates of the interfering users. Future works include extension of the results to more than two users and obtaining tight achievable regions or inner bounds.


\section*{Acknowledgements}
The work of V.  Ramachandran, A. Barreiro and A. Alvarado has received funding from the European Research Council (ERC) under the European Union's Horizon 2020 research and innovation programme (grant agreement No 757791). The work of G.~Liga is funded by the EuroTechPostdoc programme under the European Union's Horizon 2020 research and innovation programme (Marie Sk\l{}odowska-Curie grant agreement No 754462).


\appendices

\section{Proof of Theorem \ref{thm:OBwdm}}\label{proof.Theorem}
We establish the outer bound using information theoretic inequalities. The rate of user$-1$ can be upper bounded as follows.
\begin{align}
&nR_1 = H(M_1) \stackrel{(a)}= H(M_1|W^n) \notag\\
&= H(M_1|W^n)-H(M_1|W^n,\tilde{Y}^n)+H(M_1|W^n,\tilde{Y}^n) \notag\\
&\stackrel{(b)}\leq I(M_1;\tilde{Y}^n|W^n)+1+P_{e}^{(n)} nR_1  \notag\\
&= I(M_1;\tilde{Y}^n|W^n)+n(1/n+P_{e}^{(n)} R_1) \notag\\
&\stackrel{(c)}= I(M_1;\tilde{Y}^n|W^n)+n\epsilon_n \notag\\
&\stackrel{(d)}\leq I(X^n;\tilde{Y}^n|W^n)\!+\!n\epsilon_n  = h(\tilde{Y}^n|W^n)-\!h({N^{(x)}}^n)+\!n\epsilon_n \notag\\
&\stackrel{(e)}\leq \sum_{i=1}^n h(\tilde{Y}_i|W_i)-\sum_{i=1}^n h(N_i^{(x)})+n\epsilon_n \notag\\
&= \sum_{i=1}^n \mathbb{E}[h(\tilde{Y}_i|W_i=w)]-\sum_{i=1}^n h(N_i^{(x)})+n\epsilon_n \notag\\
&\stackrel{(f)}\leq \sum_{i=1}^n \mathbb{E}\left[\frac{1}{2}\log\left(\frac{ \textup{det}(\textup{cov}(\tilde{Y}_i^R,\tilde{Y}_i^I|W_i=w))}{\textup{det}(\textup{cov}({N_i^{(x)}}^R,{N_i^{(x)}}^I))}\right)\right]+n\epsilon_n\!, \label{eq:conv1}
\end{align}
where (a) follows since $M_1$ is independent of $W^n$, (b) follows from Fano's inequality, (c) follows by defining $\epsilon_n =(1/n+P_{e}^{(n)} R_1)$ with $\epsilon_n \xrightarrow{n \to \infty} 0$, (d) follows from the data processing inequality since $M_1 \to X^n \to \tilde{Y}^n$ forms a Markov chain, (e) follows since conditioning does not increase the entropy and the fact that the additive noise is i.i.d., while (f) follows from the fact that Gaussian random vectors maximize the differential entropy under a covariance constraint.
Similarly, we obtain for the second user
\begin{align}
nR_2 &\leq \sum_{i=1}^n \mathbb{E}\left[\frac{1}{2}\log\left(\frac{ \textup{det}(\textup{cov}(\tilde{Z}_i^R,\tilde{Z}_i^I|X_i=x))}{\textup{det}(\textup{cov}({N_i^{(w)}}^R,{N_i^{(w)}}^I))}\right)\right]+n\epsilon_n. \label{eq:conv2}
\end{align}
For the sum rate upper bound, we can write the following chain of inequalities following similar reasoning.
\begin{align}
&n(R_1+R_2) = H(M_1,M_2)\notag\\
&= H(M_1,M_2)-H(M_1,M_2|\tilde{Y}^n,\tilde{Z}^n)+H(M_1,M_2|\tilde{Y}^n,\tilde{Z}^n) \notag\\
&\leq I(M_1,M_2;\tilde{Y}^n,\tilde{Z}^n)+n\epsilon_n  \notag\\
&\leq I(X^n,W^n;\tilde{Y}^n,\tilde{Z}^n)+n\epsilon_n  \notag\\
&= h(\tilde{Y}^n,\tilde{Z}^n)-h({N^{(x)}}^n,{N^{(w)}}^n)+n\epsilon_n \notag\\
&\leq \sum_{i=1}^n h(\tilde{Y}_i,\tilde{Z}_i)-\sum_{i=1}^n h(N_i^{(x)},N_i^{(w)})+n\epsilon_n \notag\\
&\leq \sum_{i=1}^n \frac{1}{2}\log\left(\frac{\textup{det}(\textup{cov}(\tilde{Y}_i^R,\tilde{Y}_i^I,\tilde{Z}_i^R,\tilde{Z}_i^I))}{\textup{det}(\textup{cov}({N_i^{(x)}}^R,{N_i^{(x)}}^I,{N_i^{(w)}}^R,{N_i^{(w)}}^I))}\right) \notag\\
&\phantom{wwwwwww}+n\epsilon_n. \label{eq:conv3}
\end{align}

It now remains to bound the $\log(\textup{det}(\cdot))$ terms in expressions \eqref{eq:conv1}--\eqref{eq:conv3}. Equations \eqref{eq:memless1} and \eqref{eq:memless2} can be expressed in terms of their respective real and imaginary components. For this step, let us denote
\begin{align*}
&X_i=X_i^R+jX_i^I, \:\: W_i=W_i^R+jW_i^I, \\ &\tilde{Y}_i=\tilde{Y}_i^R+j\tilde{Y}_i^I, \:\:\tilde{Z}_i=\tilde{Z}_i^R+j\tilde{Z}_i^I,\\ 
&N_i^{(x)}={N_i^{(x)}}^{R}+j{N_i^{(x)}}^{I},\\ &N_i^{(w)}={N_i^{(w)}}^{R}+j{N_i^{(w)}}^{I}.
\end{align*}
Then it follows that
\begin{align}
\tilde{Y}_i^R &= (1+|W_i|^2 g_{(x)}^{R}) X_i^R-|W_i|^2 g_{(x)}^{I} X_i^I+{N_i^{(x)}}^{R}, \\
\tilde{Y}_i^I &= (1+|W_i|^2 g_{(x)}^{R}) X_i^I+|W_i|^2 g_{(x)}^{I} X_i^R+{N_i^{(x)}}^{I}, \\
\tilde{Z}_i^R &= (1+|X_i|^2 g_{(w)}^{R}) W_i^R-|X_i|^2 g_{(w)}^{I} W_i^I+{N_i^{(w)}}^{R}, \\
\tilde{Z}_i^I &= (1+|X_i|^2 g_{(w)}^{R}) W_i^I+|X_i|^2 g_{(w)}^{I} W_i^R+{N_i^{(w)}}^{I}.
\end{align}
Let $\mathbb{E}[(X_i^R)^2]=p_{1i}^R$ and $\mathbb{E}[(X_i^I)^2]=p_{1i}^I$ such that $p_{1i}^R+p_{1i}^I \leq P_{1i}$. Similarly, denote $\mathbb{E}[(W_i^R)^2]=p_{2i}^R$ and $\mathbb{E}[(W_i^I)^2]=p_{2i}^I$ such that $p_{2i}^R+p_{2i}^I \leq P_{2i}$. Hence we can write
\begin{align}
&\textup{det}(\textup{cov}(\tilde{Y}_i^R,\tilde{Y}_i^I|W_i=w)) \notag\\
&=\textup{det}\left(\textup{cov}\left((1+|w|^2 g_{(x)}^{R}) X_i^R-|w|^2 g_{(x)}^{I} X_i^I+{N_i^{(x)}}^{R}, \right.\right. \notag\\
&\phantom{www}\left.\left.(1+|w|^2 g_{(x)}^{R}) X_i^I+|w|^2 g_{(x)}^{I} X_i^R+{N_i^{(x)}}^{I}|W_i=w\right)\right) \notag\\
&=\textup{det}\left(\textup{cov}\left((1+|w|^2 g_{(x)}^{R}) X_i^R-|w|^2 g_{(x)}^{I} X_i^I+{N_i^{(x)}}^{R}, \right.\right. \notag\\
&\phantom{www}\left.\left.(1+|w|^2 g_{(x)}^{R}) X_i^I+|w|^2 g_{(x)}^{I} X_i^R+{N_i^{(x)}}^{I}\right)\right) \notag\\
&\stackrel{(a)}\leq \!\frac{1}{4}\!\left(\!\begin{aligned}&\textup{var}((1+|w|^2 g_{(x)}^{R}) X_i^R-|w|^2 g_{(x)}^{I} X_i^I+{N_i^{(x)}}^{R})\\ &+\!\textup{var}(\!(1+|w|^2 g_{(x)}^{R}) X_i^I+|w|^2 g_{(x)}^{I} X_i^R+{N_i^{(x)}}^{I}\!)\end{aligned}\!\!\right)^2 \notag\\
&= \left(\frac{(1+2g_{(x)}^{R}|w|^2+|g_{(x)}|^2|w|^4)(p_{1i}^R+p_{1i}^I)+2\sigma^2}{2}\right)^2 \notag\\
&\stackrel{(b)}\leq \left(\frac{(1+2g_{(x)}^{R}|w|^2+|g_{(x)}|^2|w|^4)P_{1i}}{2}+\sigma^2\right)^2, \label{eq:conv4}
\end{align}
where (a) follows since $\textup{det}(A) \leq \left(\frac{\textup{trace}(A)}{n}\right)^n$ for any $n\times n$ square matrix $A$, while (b) follows from the power constraint. From \eqref{eq:conv1} and \eqref{eq:conv4},
\begin{align}
&n(R_1-\epsilon_n) \notag\\
&\leq \sum_{i=1}^n \mathbb{E}\left[\log\left(1\!+\!\frac{(1+2g_{(x)}^{R}|w|^2+|g_{(x)}|^2|w|^4)P_{1i}}{2\sigma^2}\right)\right] \notag\\
&\stackrel{(a)}\leq n\mathbb{E}\left[\log\left(1\!+\!\frac{(1+2g_{(x)}^{R}|w|^2+|g_{(x)}|^2|w|^4)P_{1}}{2\sigma^2}\right)\right] \notag\\
&\stackrel{(b)}\leq n\log\left(1\!+\!\frac{(1+2g_{(x)}^{R}\mathbb{E}[|w|^2]+|g_{(x)}|^2\mathbb{E}[|w|^4])P_{1}}{2\sigma^2}\right) \notag\\
&= n\log\left(1+\frac{(1+2g_{(x)}^{R} P_2+2|g_{(x)}|^2 P_2^2)P_{1}}{2\sigma^2}\right) = nU_1,
\end{align}
where (a) and (b) follow from the concavity of the $\log(\cdot)$ and Jensen's inequality. Dividing throughout by $n$ and letting $n \to \infty$ (which makes $\epsilon_n \to 0$) completes the proof for the bound on the individual user rate. Similarly, we obtain
\begin{align}
&\textup{det}(\textup{cov}(\tilde{Z}_i^R,\tilde{Z}_i^I|X_i=x)) \notag\\
&\leq \left(\frac{(1+2g_{(w)}^{R}|x|^2+|g_{(w)}|^2|x|^4)P_{2i}}{2}+\sigma^2\right)^2, \label{eq:conv5}
\end{align}
which along with \eqref{eq:conv2} gives
\begin{align}
n(R_2-\epsilon_n) &\leq n\log\left(1+\frac{(1+2g_{(w)}^{R} P_1+2|g_{(w)}|^2 P_1^2)P_{2}}{2\sigma^2}\right) \notag\\
&= nU_2.
\end{align}
Finally for the sum rate term, the determinant of the covariance matrix can be computed along similar lines
\begin{align}
&\textup{det}(\textup{cov}(\tilde{Y}_i^R,\tilde{Y}_i^I,\tilde{Z}_i^R,\tilde{Z}_i^I)) \notag\\
&\leq \left(\frac{\textup{var}(\tilde{Y}_i^R)+\textup{var}(\tilde{Y}_i^I)+\textup{var}(\tilde{Z}_i^R)+\textup{var}(\tilde{Z}_i^I)}{4}\right)^4 \notag\\
&= \left(\sigma^2+\frac{1}{4}\left(\!\begin{aligned}&(1+2g_{x}^{R} P_{2i}+4|g_{x}|^2 P_{2i}^2)P_{1i}\\ &+(1+2g_{w}^{R} P_{1i}+4|g_{w}|^2 P_{1i}^2)P_{2i}\end{aligned}\right)\right)^4.  \label{eq:conv6}
\end{align}
Expressions \eqref{eq:conv3} and \eqref{eq:conv6} together give
\begin{align}
&n(R_1+R_2-\epsilon_n) \notag\\
&\leq \sum_{i=1}^n 2\log\!\left(\!1\!+\!\frac{1}{4\sigma^2}\!\!\left(\!\begin{aligned}&(1+2g_{x}^{R} P_{2i}+4|g_{x}|^2 P_{2i}^2)P_{1i}\\ &+(1+2g_{w}^{R} P_{1i}+4|g_{w}|^2 P_{1i}^2)P_{2i}\end{aligned}\!\right)\!\!\right) \notag\\
&\leq 2n\log\!\left(\!1\!+\!\frac{1}{4\sigma^2}\!\!\left(\!\begin{aligned}&(1+2g_{x}^{R} P_{2}+4|g_{x}|^2 P_{2}^2)P_{1}\\ &+(1+2g_{w}^{R} P_{1}+4|g_{w}|^2 P_{1}^2)P_{2}\end{aligned}\!\right)\!\!\right) \notag\\
&= 2n\log\left(\frac{2^{U_1}+2^{U_2}}{2}+\frac{|g_{x}|^2 P_2^2 P_1+|g_{w}|^2 P_1^2 P_2}{2\sigma^2}\right).
\end{align}
Dividing throughout by $n$ and letting $n \to \infty$ (which makes $\epsilon_n \to 0$) completes the proof.

\nocite{}
\bibliographystyle{IEEEtran}
\bibliography{mylitoptics.bib}

\end{document}